%
\documentclass[runningheads]{llncs}
\usepackage{graphicx}
%




\newcommand{\ie}{\textit{i.e.}\xspace}
\newcommand{\etal}{\textit{et al.}\xspace}
\newcommand{\wrt}{\textit{w.r.t.}\xspace}
\newcommand{\eg}{\textit{e.g.}\xspace}
\newcommand{\etc}{\textit{etc.}\xspace}

\newcommand{\mci}{\textsc{Minimum Connectivity Inference}\xspace}
\newcommand{\mcishort}{\textsc{MCI}\xspace}
\newcommand{\cga}{\textsc{CGA}\xspace}
\newcommand{\ilp}{\textsc{Flow-MILP}\xspace}

\newcommand{\E}{\mathcal{E}}
\newcommand{\B}{\mathcal{B}}
\newcommand{\C}{\mathcal{C}}
\renewcommand{\P}{\mathcal{P}}
\newcommand{\M}{\mathcal{M}}

\usepackage[linesnumbered,ruled,vlined]{algorithm2e}
\SetKwInput{KwData}{Input}
\SetKwInput{KwResult}{Output}

\usepackage{array}
\usepackage{float}
\newcolumntype{C}[1]{>{\centering\arraybackslash}m{#1}}

\usepackage[utf8]{inputenc}


\begin{document}
\title{Constraint Generation Algorithm for the Minimum Connectivity Inference Problem}
\titlerunning{Constraint Generation Algorithm for the Minimum Connectivity Problem}
%
\author{\'Edouard Bonnet\inst{1} \and
Diana-Elena Fălămaş\inst{1,2} \and
R\'emi Watrigant\inst{1}}
\authorrunning{\'E. Bonnet, D. Fălămaş, and R. Watrigant}
%
\institute{Univ Lyon, CNRS, ENS de Lyon, Universit\'e Claude Bernard Lyon 1,
LIP, F-69342, LYON Cedex 07, France \and
Technical University of Cluj-Napoca, Romania.  \\
~\\
 \email{\{edouard.bonnet, remi.watrigant\}@ens-lyon.fr}, \email{falamasd@yahoo.com}  }
\maketitle              
\begin{abstract}

Given a hypergraph $H$, the \mci problem asks for a graph on the same vertex set as $H$ with the minimum number of edges such that the subgraph induced by every hyperedge of $H$ is connected.
This problem has received a lot of attention these recent years, both from a theoretical and practical perspective, leading to several implemented approximation, greedy and heuristic algorithms. Concerning exact algorithms, only Mixed Integer Linear Programming (MILP) formulations have been experimented, all representing connectivity constraints by the means of graph flows. 
In this work, we investigate the efficiency of a \textit{constraint generation algorithm}, where we iteratively add cut constraints to a simple ILP until a feasible (and optimal) solution is found.
It turns out that our method is faster than the previous best flow-based MILP algorithm on random generated instances, which suggests that a constraint generation approach might be also useful for other optimization problems dealing with connectivity constraints.
At last, we present the results of an enumeration algorithm for the problem.

\keywords{Hypergraph  \and Constraint generation algorithm \and Connectivity problem.}
\end{abstract}

\section{Introduction and related work}\label{sec:intro}
We study the problem where one wants to infer a binary relation over a set of items $V$ (that is, a graph), where the input consists of some subsets of those items which are known to be connected in the solution we are looking for. In other words, the input can be represented by a hypergraph $H=(V, \E)$, and we are looking for an underlying undirected graph $G=(V, E)$ such that for every hyperedge $S \in \E$, the subgraph induced by $S$, denoted by $G[S]$, is connected (such a graph $G$ will be called a \textit{feasible solution} in the sequel). 
 Observe that it is easy to construct trivial feasible solutions to this problem: consider for instance the graph $K(H)$ having vertex set $V$ and an edge $uv$ iff $u$ and $v$ belong to a same hyperedge. Since these solutions are unlikely to be of great interest in practice, it makes sense to add an optimization criteria. In this paper, we focus on minimizing the number of edges of the solution. More formally, we study the following problem:\\

\begin{minipage}{\textwidth}
\mci (\mcishort)\\
\underline{Input:} a hypergraph $H=(V, \E)$\\
\underline{Output:} a graph $G=(V, E)$ such that $G[S]$ is connected $\forall S \in \E$\\
\underline{Goal:} minimize $|E(G)|$\\
\end{minipage}

This optimization problem is NP-hard \cite{DuMi88}, and was first introduced for the design of vacuum systems \cite{DuMi95}. It has then be studied independently in several different contexts, mainly dealing with network design: computer networks \cite{FaHuWuEr08}, social networks \cite{AnAsReHuVoZe10} (more precisely modeling the \textit{publish/subscribe} communication paradigm \cite{ChMeToVi07,HoHrIzOnStWa12,OnRi09}), but also other fields, such as auction systems \cite{CoDeSa04} and structural biology \cite{AgArCaCaCo13,AgArCaCaCo15}. Finally, we can mention the issue of hypergraph drawing, where, in addition to the connectivity constraints, one usually looks for graphs with additional properties (\eg planarity, having a tree-like structure... \etc) \cite{BrCoPaSa12,JoPo87,KlMcNo14,KoSt03}. This plethora of applications explains why this problem is known under different names, such as \textsc{Subset Interconnected Design}, \textsc{Minimum Topic Overlay} or \textsc{Interconnection Design}. For a comprehensive survey of the theoretical work done on this problem, see \cite{ChKoNiSoSuWe15} and the references therein.

Concerning the implementation of algorithms, previous works mainly focused on approximation, greedy and other heuristic techniques \cite{OnRi09}.
To the best of our knowledge, the first exact algorithm was designed by Agarwal \etal \cite{AgArCaCaCo13,AgArCaCaCo15} in the context of structural biology, where the sought graph represents the contact relations between proteins of a macro-molecule, which has to be inferred from a hypergraph constructed by chemical experiments and mass spectrometry. In this work, the authors define a Mixed Integer Linear Programming (MILP) formulation of the problem, representing the connectivity constraints by flows. They also provide an enumeration method using their algorithm as a black box, by iteratively adding constraints to the MILP in order to forbid already found solutions. Both their optimization and enumeration algorithms were tested on some real-life (from a structural biology perspective) instances for which the contact graph was already known.

This MILP model was then improved recently by Dar \etal \cite{DaFiMaSc18}, who mainly reduced the number of variables and constraints of the formulation, but still representing the connectivity constraints by the means of flows. In addition, they also presented and implemented a number of (already known and new) reduction rules.
This new MILP formulation together with the reduction rules were then compared to the algorithm of Agarwal \etal on randomly-generated instances. 
For every kind of tested hypergraphs (different number and sizes of hyperedges), they observed a drastic improvement of both the execution time and the maximum size of instances that could be solved.

In this paper we initiate a different approach for this problem, by defining a simple constraint generation algorithm relying on a cut-based ILP. This method can be seen as an application of \emph{Benders' decomposition}~\cite{Ben62}, where one wants to solve a (generally large) ILP called \emph{master problem} by decomposing it into a smaller (and easier to solve) one, adding new constraints from the master problem when the obtained solution is infeasible (this approach is sometimes known as \emph{row generation}, because new constraints are added throughout the resolution).
We first present different approaches for the addition of new constraints and compare their efficiency on random instances.
We then evaluate the performance of our method by comparing it to the MILP formulation of Dar \etal on randomly generated instances (using the same random generator).

Finally, we present an algorithm for enumerating all optimal solutions of an instance, which we compare to the approach developed by Agarwal \etal.

\paragraph{Organization of the paper.} In the next section, we introduce our constraint generation algorithm. In Section~\ref{sec:experiments}, we recall the random generator of Dar \etal and present the results of the comparison between our constraint generation algorithm and the flow-based MILP formulation. Finally, Section~\ref{sec:enum} is devoted to our enumeration algorithm.



\section{Constraint generation algorithm for \mcishort}
\subsection{Presentation}

Rather than defining a single (M)ILP model whose optimal solutions coincide with optimal solutions of the \mcishort problem,  our approach is a \textit{constraint generation} algorithm which starts with a simple ILP whose optimal solutions do not necessarily correspond to feasible solutions for \mcishort. Then, some constraints are added to the model which is solved again. This process is repeated until we reach a feasible solution.

Let us define more formally our approach. In the sequel, $H = (V, \E)$ will always denote our input hypergraph, and $n$ and $m$ will always denote the number of vertices and hyperedges of $H$, respectively. Recall that $K(H)$ denotes the graph with vertex set $V$ having an edge $uv$ iff $u$ and $v$ belong to a same hyperedge.
Let us first define our starting ILP model. It has one binary variable $x_e$ for every possible edge $e$ of $K(H)$, which takes value $1$ iff the corresponding edge is in the solution. In the following, we will thus make no distinction between solutions of our ILP and graphs with vertex set $V$.

The constraints that will be added are defined by \textit{cuts} $(X_1, X_2, \dots, X_r)$, $r \ge 2$, where $X_i \subseteq V$, $X_i \neq \emptyset$ and $X_i \cap X_j = \emptyset$ for every $i, j \in \{1, \dots, r\}$, $i \neq j$. Given a cut $C := (X_1, \dots, X_r)$, we define its corresponding set of edges $E(C) := \{xy \in E(G), x \in X_i, y \in X_j, i \neq j\}$.
Given a set of cuts $\C$, let $\M(\C)$ be the following ILP:

\begin{center}
\fbox{
\begin{tabular}{l l r r}
Minimize & $\sum\limits_{e \in K(H)} x_e$ & \\ 
subject to: & &\\
 & $\sum\limits_{u, v \in S} x_{uv} \ge |S|-1$ & $\forall S \in \E$ & (1)\\
 & $\sum\limits_{e \in E(C)} x_e \ge r-1$ & $\forall C := (X_1, \dots, X_r) \in \C$ & (2)\\
 & $x_e \in \{0, 1\}$ & $\forall e \in K(H)$ &
\end{tabular}
}
\end{center}

Constraints (1) forces the solution to contain at least $|S|-1$ edges within every hyperedge. Although this constraints is not sufficient to guarantee the connectivity in every hyperedge (for instance, two disjoint cycles also satisfy this constraint), its purpose is mainly to speed-up the resolution.

The purpose of constraints (2) is to forbid $X_1$, $\dots$, $X_r$ to be connected components in the solution: it forces the quotient graph\footnote{The \emph{quotient graph} \wrt $X_1$, $\dots$, $X_r$ has $r$ vertices $v_1$, $\dots$, $v_r$, and an edge $v_iv_j$ whenever there is an edge between a vertex of $X_i$ and a vertex of $X_j$, $i \neq j$.} \wrt $X_1$, $\dots$, $X_r$ to contain at least $r-1$ edges. Notice that if $r=2$, then it forces the solution to have an edge between the two parts $X_1$ and $X_2$.

For a set $S \subseteq V$, define $\B_S := \{(X, S \setminus X): X \subseteq S, X \notin \{\emptyset, S\}\}$ the set of cuts constructed from all non-trivial\footnote{A non-trivial partition of a set $V$ is a partition where each set is different from $\emptyset$ and $V$.} bipartitions of $S$, and $\P_S = \{(X_1, \dots, X_r): r \ge 2, X_i \subseteq S$, $X_i \neq \emptyset$, $ \cup_{i=1}^r X_i = S$ and $X_i \cap X_j = \emptyset$ for all $i, j \in \{1, \dots, r\}, i \neq j\}$ the set of cuts constructed from all non-trivial partitions of $S$. Moreover let $\B_H := \bigcup_{S \in \E} \B_S$ and $\P_H := \bigcup_{S \in \E} \P_S$. We have the following:

\begin{proposition}
Optimal solutions of $\M(\P_H)$ are in one-to-one correspondence with optimal solutions of $\M(\B_H)$ which are themselves in one-to-one correspondence with optimal solutions of the \mcishort instance.
\end{proposition}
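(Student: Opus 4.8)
The plan is to prove the two correspondences in turn, reducing everything to the claim that a graph $G$ on vertex set $V$ satisfies all constraints (2) for the cut family $\B_H$ (equivalently, for $\P_H$) if and only if $G$ is a feasible solution of the \mcishort instance, i.e. $G[S]$ is connected for every $S\in\E$. Since both ILPs minimize the same objective $\sum_{e\in K(H)} x_e$ over (subsets of) the same variable set, once we show that the feasible regions of $\M(\B_H)$ and $\M(\P_H)$ both coincide exactly with the set of feasible solutions of \mcishort, the ``one-to-one correspondences'' are simply the identity map on graphs, and optimality is automatically preserved. So the whole statement collapses to proving a single biconditional (twice, or once with the easy inclusion $\P_H\supseteq\B_H$ handled separately).

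First I would observe that $\B_S\subseteq\P_S$ for every $S$ (a bipartition is a partition with $r=2$), hence $\B_H\subseteq\P_H$, so any $G$ feasible for $\M(\P_H)$ is feasible for $\M(\B_H)$; I must also check constraints (1) are the same in both, which is immediate. It remains to show: (a) if $G$ is feasible for $\M(\B_H)$ then $G[S]$ is connected for all $S\in\E$; and (b) if $G[S]$ is connected for all $S\in\E$ then $G$ satisfies all constraints (2) of $\M(\P_H)$ (constraints (1) then follow since a connected graph on $|S|$ vertices has at least $|S|-1$ edges). For (a): suppose some $G[S]$ is disconnected, and let $X$ be the vertex set of one connected component of $G[S]$; then $(X, S\setminus X)\in\B_S\subseteq\B_H$ is a cut with $r=2$, and $E(C)\cap E(G)=\emptyset$ since no edge of $G$ joins $X$ to $S\setminus X$ within $S$ — wait, one must be careful that $E(C)$ as defined ranges over edges $xy\in E(G)$, so the constraint reads $0\ge 1$, a contradiction; hence $G$ is infeasible for $\M(\B_H)$. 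For (b): take any $S\in\E$ and any cut $C=(X_1,\dots,X_r)\in\P_S$; since $G[S]$ is connected and $X_1,\dots,X_r$ partition $S$, contracting each $X_i$ yields a connected quotient graph on $r$ vertices, which therefore has at least $r-1$ edges, i.e. $\sum_{e\in E(C)} x_e\ge r-1$. This establishes the equivalence, and thus the proposition.

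The main obstacle I anticipate is purely a matter of careful bookkeeping rather than a genuine mathematical difficulty: I must make sure the definition of $E(C)$ — which is phrased in terms of $E(G)$ — is handled consistently, so that an ``empty cut'' genuinely forces an infeasibility ($0\ge r-1\ge 1$) rather than being vacuous, and I should double-check the edge case $r=2$ against the remark in the text. I would also want to note explicitly that $K(H)$ contains every edge that could possibly appear in a feasible solution (any feasible $G$ can be assumed to be a subgraph of $K(H)$, since edges outside $K(H)$ never help connectivity within a hyperedge and only increase the objective), so that restricting the ILP variables to edges of $K(H)$ loses no optimal solution. With those two points nailed down, the argument that the identity map is the claimed one-to-one correspondence — feasible region to feasible region, preserving the objective hence optimality — is routine.
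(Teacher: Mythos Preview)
Your proposal is correct and follows essentially the same approach as the paper: both argue the chain of inclusions ``feasible for $\M(\P_H)$ $\Rightarrow$ feasible for $\M(\B_H)$ $\Rightarrow$ feasible for \mcishort $\Rightarrow$ feasible for $\M(\P_H)$'' via the inclusion $\B_H\subseteq\P_H$, a disconnecting bipartition for the second implication, and the connectedness of the quotient for the third. Your version is simply more explicit (you spell out the quotient-graph argument for (b), verify constraint~(1), and address the $K(H)$ restriction), whereas the paper's proof is a three-sentence sketch of the same cycle.
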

\begin{proof}
We have $\B_H \subseteq \P_H$, hence a feasible solution of $\M(\P_H)$ is also a feasible solution of $\M(\B_H)$. A feasible solution of $\M(\B_H)$ is also a feasible solution of \mcishort, since otherwise Constraint (2) would not be satisfied for some bipartition of some hyperedge. Finally a feasible solution of \mcishort is a feasible solution of $\M(\P_H)$, otherwise a hyperedge would not induce a connected subgraph.
\qed
\end{proof}

By the previous proposition, it would be sufficient to solve $\M(\B_H)$ or $\M(\P_H)$. However, we have $|\P_H| = \sum_{S \in \E} 2^{|S|}-1$ and $|\B_H| = \sum_{S \in \E} 2^{|S|-1}-1$, which makes these naive ILPs inefficient from a practical point of view. Fortunately, it turns out that for many instances in practice, only a small number of cuts among $\B_H$ (resp. $\P_H$) is actually needed in order to ensure connectivity in every hyperedge. This idea is the basis of our constraint generation algorithm described below.

\medskip 

\begin{algorithm}[H]
\SetAlgoLined
\KwData{ a hypergraph $H=(V, \E)$}
\KwResult{ a solution $G=(V,E)$ } 
	$\C \leftarrow \C_{init}(H)$ \\
	$G \leftarrow solve(\M(\C))$ \\
	\While{$G$ is not feasible}{
		$\C \leftarrow \C \cup newCuts(G)$ \\
		$G \leftarrow solve(\M(\C))$ \\
	}
 \caption{constraint generation algorithm for \mcishort} 
 \label{algo:main}
\end{algorithm}

\medskip

Our strategy is specified by a set of initial cuts of the input hypergraph $\C_{init}(H)$, and a routine $newCuts(G)$ which takes a non-feasible solution $G$ as input, and outputs a set of cuts. If the $newCuts(.)$ routine always returns cuts from $\B_H$ (resp. $\P_H$) that were not considered before, then the algorithm clearly outputs a feasible optimal solution for the problem, since it only stops when a feasible solution is found and, in the worst case, it ends by solving $\M(\B_H)$ (resp. $\M(\P_H)$). This proves that Algorithm~\ref{algo:main} always terminates and returns an optimal solution for \mcishort, provided that the $newCuts(.)$ routine satisfies the property described above. The choices of the initial set of cuts and this routine are described in the next sub-section.\\

\subsection{Choice of cuts}

The choice of cuts is a crucial feature of our algorithm. The main challenge is to find the policies that will lead to a right balance of the number of added constraints: if too few constraints are added in each iteration, then the number of these iterations will increase, which will then result in a lack of efficiency. On the opposite, if too many constraints are added at the beginning and/or in each iteration, then the size of the ILP will increase too quickly, which will slow down the solver, and then result in a lack of efficiency once again.
Here we present a set of initial set of cuts, and three possible $newCuts(.)$ routines. We then conducted an empirical evaluation of these strategies (using the initial set of cuts or not, followed by one of the three $newCuts(.)$ routine, thus defining six possible strategies).

\paragraph{Initial set of cuts.} For every hyperedge $S \in \E$, and every vertex $v \in S$, the idea is to add the cut $(\{v\}, S \setminus \{v\})$. This set of cuts forbids solutions with isolated vertices in every hyperedge. One could also consider cuts $(X, X \setminus S)$ formed from every subset $X \subseteq S$ of a fixed size $q$. However, for $q=2$ already, we noticed a drop of efficiency, mainly caused by the large number of constraints it creates. Hence, we shall initialize $\C_{init}$ with the cuts formed by singletons only. In the sequel, this initial set of cuts will sometimes be called \emph{singleton cuts}.\\
	
\paragraph{The $newCuts(.)$ routine.} Given a non-feasible solution $G$ of \mcishort, recall that we shall add, for every hyperedge $S$ such that $G[S]$ is disconnected, a set of cuts. Let $S$ be such a hyperedge. Notice that the objective is not to guarantee connectivity in the \textit{very next} iteration of the algorithm, but to constrain the model more and more. Let $S_1$, $\dots$, $S_p$ be the connected components of $G[S]$, with $p \ge 2$. We considered three natural ideas for the set of new cuts corresponding to $S$ in this situation:
	\begin{itemize}
		\item \textbf{Routine 1}: add only one cut $(A, B)$ corresponding to a balanced bipartition of the connected components, that is, $A \cup B = S$, $A \cap B = \emptyset$ and $S_i \subseteq A$ or $S_i \subseteq B$ for every $i \in \{1, \dots, p\}$, and the absolute value of $|A|-|B|$ is as minimum as possible. Since the problem of finding a balanced bipartition of a given set of numbers is an NP-hard problem, the computation of the bipartition was done using a polynomial greedy algorithm which considers connected components in decreasing order \wrt their sizes, and iteratively adds each of them to $A$ (resp. $B$) whenever $|A| < |B|$ (resp. $|A| \ge |B|$). Notice that this algorithm provides a $\frac{7}{6}$-approximation of an optimal bipartition, and runs in $O(p \log p)$ time \cite{Gr69}.
		\item \textbf{Routine 2}: add the cut $(S_i, \cup_{j \neq i} S_j)$, for every $i \in \{1,\dots, p\}$. This idea forbids $S_i$ to be disconnected from the rest of $S$ in the next iteration.
		\item \textbf{Routine 3}: add the cut $(S_1, \dots, S_p)$. Here, we simply forbid $G[S]$ to have the exact same connected components in the next round.
	\end{itemize}
Observe that the first two strategies return cuts from the set $\B_H$ defined previously, while the third one returns a cut which belongs to $\P_H$. In all three cases, the routine returns cuts which were not in the model, hence guaranteeing the optimality and termination of our algorithms, as seen previously.


Combining the above choices, it gives six different strategies:
\begin{itemize}
	\item Strategy 1: initial set of cuts: none ; $newCuts(.)$: Routine 1
	\item Strategy 2: initial set of cuts: none ; $newCuts(.)$: Routine 2
	\item Strategy 3: initial set of cuts: none ; $newCuts(.)$: Routine 3
	\item Strategy 4: initial set of cuts: singleton cuts ; $newCuts(.)$: Routine 1
	\item Strategy 5: initial set of cuts: singleton cuts ; $newCuts(.)$: Routine 2
	\item Strategy 6: initial set of cuts: singleton cuts ; $newCuts(.)$: Routine 3
\end{itemize}

After an empirical evaluation of the above strategies for different kind of instances, we observed a similar behaviour for all of them, with a high deviation for seemingly similar instances.
Nevertheless, we could observe that on average, strategies 4, 5, and 6 were more efficient than strategies 1, 2 and 3, especially for instances with a high number of vertices, which suggests that using a non-empty set of initial set of cuts should always be better. 
The closeness of the results for the three routines can be explained by the fact that in practice (in our random instances, all having less than 25 vertices), the number of connected components of every hyperedge of non-feasible solution is usually small (frequently 2 or 3, and often smaller than 5), which leads to similar ILP models to be solved (for instance, when there are only two connected components, all three routines output exactly the same set of cuts).

Our first empirical results suggest that a more fine-grained comparison should be performed in order to better understand which hypergraph parameters influence the efficiency of our different strategies. This approach could then be used in a more general algorithm which would first analyze the instance to solve, and then choose the right strategy to use. Another option would be to run all strategies in parallel in order to obtain the least running time for every instance.

In the sequel, we decided to effectively use the singleton cuts as initial set of cuts, and to use Routine 1 as $newCuts(.)$ (that is, it corresponds to strategy 4 described above).

\section{Experimental evaluation}\label{sec:experiments}
\subsection{Generation of instances}\label{sec:generator}

Our random generator of instances follows the same rules as in the experiment conducted by Dar \etal \cite{DaFiMaSc18}. A given scenario depends on the following features:
\begin{itemize}
	\item \textbf{Number of vertices $n$} of the hypergraph.
	\item \textbf{Density of the hypergraph $d = \frac{m}{n}$}. As in \cite{DaFiMaSc18}, we used the following values: $d \in \{1, 3, 5\}$.
	\item \textbf{Hyperedge size bounds and distributions.} For this parameter, we used the four types defined by \cite{DaFiMaSc18} plus a new fifth type. For the first four, a size is chosen uniformly at random for each hyperedge among prescribed upper and lower bounds:
	\begin{itemize}
		\item Type 1: sizes of hyperedges between $2$ and $n$
		\item Type 2: sizes of hyperedges between $2$ and $\lceil n/2 \rceil$
		\item Type 3: sizes of hyperedges between $\lceil n/4 \rceil$ and $n$
		\item Type 4: sizes of hyperedges between $\lceil n/4 \rceil$ and $\lceil n/2 \rceil$.
	\end{itemize}
	Then, for each hyperedge, vertices are chosen uniformly at random until the desired size is reached.
	For the fifth type, hyperedges are chosen uniformly at random among all possible hyperedges. To do so, for each hyperedge, each vertex is added with probability $1/2$ until the desired number of distinct hyperedges is reached. Hence, the sizes of hyperedges follow a uniform distribution for the first four types, and a gaussian distribution (centered at $\frac{n}{2}$) for the fifth one.
\end{itemize}

In the following, a \emph{scenario} corresponds to a triple $(n, d, Type)$. In all experiments conducted in this paper, 50 instances were generated for each scenario. Moreover, a time limit of 900 seconds (15 minutes) was set for each instance.

\subsection{Comparison with the flow-based MILP formulation}
In this sub-section, we present the results of the comparison between our constraint generation algorithm and the best state-of-the-art exact algorithm for \mcishort, which is the improved flow-based MILP model of Dar \etal~\cite{DaFiMaSc18}. 
As explained in the introduction, this algorithm is itself an improvement of a previous algorithm of Agarwal \etal~\cite{AgArCaCaCo13}. Although both algorithms rely on a flow-based MILP formulation of the problem, the improvement of Dar \etal can be summarized as follows:
\begin{itemize}
	\item The MILP formulation of Dar \etal contains less variables and constraints, mainly because of a factoring of several linearly-dependent constraints in the previous formulation. They also added some new constraints in order to speed-up the resolution.
	\item The algorithm of Dar \etal also contains several pre-processing rules whose purpose is to reduce the number of vertices and hyperedges of the input instance, and thus reduce the size of the MILP formulation. These reduction rules rely on some observations of the problem, dealing with parts of the instances where the structure of an optimal solution can be inferred in polynomial time (\eg when a set of vertices belong to a same set of hyperedges of a large size). Notice that Dar \etal conducted an experimental evaluation of their reduction rules in \cite{DaFiMaSc18b}.
\end{itemize}

For the sake of completeness, we provide the MILP formlulation of Dar \etal. To this end, let us first introduce some notions and definitions. For every hyperedge $S \in \E$ they choose an arbitrary vertex $r_S \in S$ to be the source of the flow which will ensures connectivity. Hence, they define a complete digraph $A(S)$ with vertex set $S$ and, in addition to a variable $x_e$ for every edge of $K(H)$, their model has also a variable $f_a^S$ for every arc $a$ of $A(S)$. For a vertex $v \in S$, $A^-_S(v)$ (resp. $A^+_S(v)$) denotes the set of arcs of $A(S)$ entering $v$ (resp. leaving $v$). The model is the following:

\begin{center}
\fbox{
\begin{tabular}{l l r r}
Minimize & $\sum\limits_{e \in K(H)} x_e$ & \\ 
subject to: & &\\
 & $\sum\limits_{u, v \in S} x_{uv} \ge |S|-1$ & $\forall S \in \E$ & \\
 & $\sum\limits_{a \in A^-_S(v)} f_a^S - \sum\limits_{a \in A^+_S(v)} f_a^S = -1$ & $\forall S \in \E$, $\forall v \in S \setminus r_S$ & \\
 & $f_{uv}^S + f_{vu}^S \le (|S|-1)x_e$ & $\forall S \in \E$, $\forall u, v \in S$ & \\
 & $f_a^S \ge 0$ & $\forall S \in \E$, $\forall a \in A(S)$ & \\
 & $x_e \in \{0, 1\}$ & $\forall e \in K(H)$ &
\end{tabular}
}
\end{center}

Since our goal was mainly to compare the performance of our constraint generation algorithm to a simple (M)ILP formulation, the reduction rules of Dar \etal were not used for both algorithms.
In the sequel, the algorithm of Dar \etal will be denoted by \ilp, and our constraint generation algorithm by \cga.\\

All experiments were conducted on a computer equipped with an Intel$^{\mbox{\textregistered}}$ Xeon$^{\mbox{\textregistered}}$ E5620  processor (64 bits) at 2.4GHz, 24GB of RAM and a Linux system (Ubuntu version 18.04.1 LTS). The implementation of our constraint generation algorithm (Strategy 4 described above) was written and run in SageMath version 8.2 (release date 05/05/2018).
The algorithm of Dar \etal was written\footnote{We used the implementation of \cite{DaFiMaSc18} provided by their authors.} and run in MATLAB$^{\mbox{\textregistered}}$~Released R2016b.
The MILP solver used in both algorithms was CPLEX$^{\mbox{\textregistered}}$~version 12.8 from IBM$^{\mbox{\textregistered}}$.
All algorithms (including all MILP resolutions) were conducted sequentially, \ie not exploiting multi-threading.
Notice that the measured time of the algorithm of Dar \etal only consists in the resolution of the MILP model (the purpose of the MATLAB$^{\mbox{\textregistered}}$ code is thus only to construct the MILP model from the instance), hence the difference of programming languages does not matter for the comparison.\\

For each scenario $(n, d, Type)$, a set of 50 instances were generated and given to both \ilp and \cga. As said previously, for each instance, a time limit of 900 seconds was set. Tables~\ref{tab:n}, \ref{tab:3n} and \ref{tab:5n} represent the results of the comparison for densities $1$, $3$ and $5$, respectively, where the running time is the average running time of all instances solved within the time limit, and the number in brackets indicates the number of instances (out of 50) effectively solved within this limit in the case this number was different from 50. The tables also show the average number of constraints in the MILP formulation of both algorithms: for \ilp it corresponds directly to the number of constraints of the MILP model, while for \cga it corresponds to the number of constraints it had to add in order to be able to solve the instance (hence, it corresponds to the number of constraints in the last ILP solved).

\begin{table}[h!]
\footnotesize
\centering
\begin{tabular}{|C{0.8cm}|C{1cm}|C{2cm}|C{2cm}|C{2cm}|C{2cm}|}
\hline
$n$ & Type & \ilp (sec) & \ilp (con) & \cga (sec) & \cga (con)\\ 
\hline
14 & 1 & 0.40 & 598.56 & 0.05 & 130.54\\ 
 & 2 & 0.10 & 195.26 & 0.02 & 78.42\\ 
 & 3 & 0.40 & 636.28 & 0.05 & 135.02\\ 
 & 4 & 0.12 & 226.58 & 0.03 & 85.70\\ 
 & 5 & 0.31 & 428.56 & 0.04 & 115.12\\
\hline
16 & 1 & 0.84 & 830.22 & 0.07 & 162.08\\ 
 & 2 & 0.16 & 277.54 & 0.04 & 99.18\\ 
 & 3 & 1.05 & 958.52 & 0.08 & 178.28\\ 
 & 4 & 0.25 & 358.30 & 0.06 & 115.60\\ 
 & 5 & 1.75 & 618.86 & 0.18 & 152.50\\
\hline
18 & 1 & 2.58 & 1163.16 & 0.11 & 201.56\\ 
 & 2 & 0.27 & 372.92 & 0.07 & 123.00\\ 
 & 3 & 8.51 & 1263.40 & 0.19 & 219.42\\ 
 & 4 & 0.40 & 466.40 & 0.09 & 139.08\\ 
 & 5 & 3.72 & 826.74 & 0.37 & 187.18\\
\hline
20 & 1 & 24.17 & 1569.76 & 0.34 & 253.22\\ 
 & 2 & 0.52 & 471.60 & 0.11 & 145.18\\ 
 & 3 & 35.33 & 1799.16 & 0.58 & 282.62\\ 
 & 4 & 1.88 & 672.04 & 0.54 & 182.14\\ 
 & 5 & 35.64 & 1158.72 & 2.97 & 250.80\\
\hline
22 & 1 & 60.53 & 2023.32 & 0.85 & 307.94\\ 
 & 2 & 1.05 & 630.04 & 0.27 & 177.80\\ 
 & 3 & 107.17 [49] & 2386.94 & 1.08 & 342.64\\ 
 & 4 & 6.31 & 838.70 & 1.14 & 216.18\\ 
 & 5 & 137.17 [47] & 1504.98 & 12.02 & 315.64\\
\hline
24 & 1 & 119.20 [49] & 2566.84 & 2.15 & 365.82\\ 
 & 2 & 3.76 & 807.26 & 0.54 & 211.76\\ 
 & 3 & 314.43 [39] & 3147.10 & 8.58 & 443.44\\ 
 & 4 & 42.49 [49] & 1140.20 & 4.29 & 272.38\\ 
 & 5 & 344.12 [28] & 1929.75 & 122.41 [49] & 404.24\\
\hline
26 & 1 & 194.88 [38] & 3342.79 & 28.25 & 448.26\\ 
 & 2 & 19.83 & 1019.42 & 4.28 & 253.16\\ 
 & 3 & 365.14 [33] & 3733.33 & 8.39 & 498.22\\ 
 & 4 & 101.67 [48] & 1338.85 & 16.92 & 318.62\\ 
 & 5 & 606.45 [8] & 2382.62 & 285.70 [31] & 478.58\\
\hline
\end{tabular} 
\caption{Comparison of running times and number of constraints between \ilp and \cga for density $1$. Columns labeled with (sec) (resp. (con) represent the average running time (resp. number of constraints).}
\label{tab:n}
\end{table}

\begin{table}[h!]
\centering
\begin{tabular}{|C{0.8cm}|C{1cm}|C{1.8cm}|C{1.8cm}|C{1.8cm}|C{1.8cm}|}
\hline
$n$ & Type & \ilp (sec) & \ilp (con) & \cga (sec) & \cga (con)\\ 
\hline
12 & 1 & 0.90 & 1056.10 & 0.08 & 275.24\\ 
 & 2 & 0.19 & 399.16 & 0.05 & 181.20\\ 
 & 3 & 0.78 & 1153.72 & 0.08 & 291.50\\ 
 & 4 & 0.30 & 469.44 & 0.06 & 198.74\\ 
 & 5 & 0.99 & 814.18 & 0.13 & 253.84\\
\hline
14 & 1 & 2.40 & 1645.76 & 0.15 & 366.24\\ 
 & 2 & 0.38 & 586.10 & 0.08 & 233.46\\ 
 & 3 & 2.75 & 1707.58 & 0.19 & 377.32\\ 
 & 4 & 0.59 & 665.06 & 0.14 & 251.90\\ 
 & 5 & 6.14 & 1254.08 & 0.69 & 340.64\\
\hline
16 & 1 & 9.67 & 2424.56 & 0.42 & 469.52\\ 
 & 2 & 0.97 & 827.48 & 0.21 & 293.36\\ 
 & 3 & 31.98 & 2773.78 & 1.38 & 518.96\\ 
 & 4 & 8.25 & 1056.66 & 1.60 & 340.16\\ 
 & 5 & 155.75 [49] & 1857.67 & 27.56 & 454.78\\
\hline
18 & 1 & 35.95 & 3269.58 & 1.72 & 578.04\\ 
 & 2 & 3.55 & 1107.86 & 0.66 & 356.14\\ 
 & 3 & 187.60 [49] & 3773.76 & 15.92 & 645.78\\ 
 & 4 & 69.12 & 1411.38 & 12.30 & 417.76\\ 
 & 5 & 393.61 [11] & 2458.09 & 241.37 [33] & 566.21\\
\hline
20 & 1 & 178.20 [44] & 4413.09 & 11.05 & 712.24\\ 
 & 2 & 21.53 & 1454.96 & 4.73 & 432.20\\ 
 & 3 & 418.32 [22] & 5395.00 & 115.07 [48] & 829.40\\ 
 & 4 & 367.51 [16] & 1943.69 & 274.18 [33] & 532.66\\ 
 & 5 & -1.00 [0] & 0.00 & 888.23 [1] & 685.00\\
\hline
22 & 1 & 330.37 [29] & 5896.55 & 76.89 & 872.12\\ 
 & 2 & 105.96 [49] & 1901.41 & 23.52 [49] & 519.82\\ 
 & 3 & 544.25 [2] & 6935.50 & 210.25 [35] & 993.80\\ 
 & 4 & -1.00 [0] & 0.00 & 689.20 [8] & 623.13\\ 
 & 5 & -1.00 [0] & 0.00 & -1.00 [0] & 0.00\\
\hline
\end{tabular} 
\caption{Comparison of running times and number of constraints between \ilp and \cga for density $3$. Columns labeled with (sec) (resp. (con)) represent the average running time (resp. number of constraints).}
\label{tab:3n}
\end{table}

\begin{table}[h!]
\centering
\begin{tabular}{|C{0.8cm}|C{1cm}|C{1.8cm}|C{1.8cm}|C{1.8cm}|C{1.8cm}|}
\hline
$n$ & Type & \ilp (sec) & \ilp (con) & \cga (sec) & \cga (con)\\  
\hline
10 & 1 & 0.44 & 1009.94 & 0.06 & 322.38\\ 
 & 2 & 0.19 & 432.96 & 0.05 & 226.90\\ 
 & 3 & 0.47 & 1023.98 & 0.07 & 324.68\\ 
 & 4 & 0.19 & 438.26 & 0.05 & 228.50\\ 
 & 5 & 0.53 & 816.82 & 0.07 & 301.58\\
\hline
12 & 1 & 1.28 & 1717.92 & 0.14 & 452.32\\ 
 & 2 & 0.36 & 674.92 & 0.07 & 303.98\\ 
 & 3 & 2.37 & 1862.66 & 0.19 & 479.42\\ 
 & 4 & 0.76 & 786.34 & 0.13 & 331.62\\ 
 & 5 & 2.63 & 1346.06 & 0.27 & 420.58\\
\hline
14 & 1 & 6.54 & 2684.80 & 0.28 & 601.18\\ 
 & 2 & 0.82 & 989.18 & 0.17 & 390.82\\ 
 & 3 & 10.00 & 2809.78 & 0.42 & 624.30\\ 
 & 4 & 1.79 & 1112.46 & 0.31  & 419.54\\ 
 & 5 & 57.82 & 2108.38 & 5.90 & 568.06\\
\hline
16 & 1 & 27.61 & 3892.42 & 0.95 & 765.86\\ 
 & 2 & 2.35 & 1375.52 & 0.45 & 485.30\\ 
 & 3 & 146.77 [49] & 4414.90 & 6.91 & 843.04\\ 
 & 4 & 73.50 [46] & 1757.11 & 42.70 & 567.12\\ 
 & 5 & 546.07 [20] & 2976.35 & 176.13 [38] & 728.76\\
\hline
18 & 1 & 91.41 [48] & 5527.35 & 3.31 & 963.82\\ 
 & 2 & 15.35 & 1884.84 & 1.61 & 597.76\\ 
 & 3 & 381.68 [30] & 6082.20 & 55.23 & 1050.14\\ 
 & 4 & 357.08 [36] & 2313.11 & 104.60 [46] & 684.29\\ 
 & 5 & 440.86 [1] & 4244.00 & 283.08 [2] & 911.00\\
\hline
20 & 1 & 178.73 [35] & 7266.37 & 17.41 & 1169.50\\ 
 & 2 & 65.76 & 2430.20 & 6.06 & 708.82\\ 
 & 3 & 588.01 [1] & 9003.00 & 347.20 [21] & 1322.05\\
 & 4 & -1.00 [0] & 0.00 & -1.00 [0] & 0.00\\
 & 5 & -1.00 [0] & 0.00 & -1.00 [0] & 0.00\\
\hline
\end{tabular} 
\caption{Comparison of running times and number of constraints between \ilp and \cga for density $5$. Columns labeled with (sec) (resp. (con)) represent the average running time (resp. number of constraints).}
\label{tab:5n}
\end{table}

As we can see in the results, our approach has a much lower average running time compared to the previous algorithm in every scenario. Indeed, on average (for all instances of all scenarios) \cga has a running time more than 13 times smaller than  \ilp. As we could expect, the newly introduced type 5 of instances is the most difficult for both algorithms, certainly because these instances contain much less small hyperedges than the others. This also explains why type 2 instances are often the easiest to solve for both algorithms.
These results also highlights the fact that our algorithm is able to solve larger instances than previously. When considering types 1, 2, 3, and 4 only:
\begin{itemize}
	\item For $m=n$ and $n=26$ for instance, our algorithm is able to solve 100\% of instances within the time limit, while \ilp can only solve less than 85\% of them. 
	\item For $m=3n$, $n=20$, \cga is able to solve 90\% of instances, while \ilp can only solve 66\%.
	\item For $m=5n$ and $n=18$, \cga is able to solve 98\% of instances while \ilp can only solve 82\% of them.
\end{itemize}

Observe also that our algorithm generate much smaller MILP models. Indeed, firstly the number of variables is always smaller, since our models do not contain any flow variables. Secondly, as we can observe in the results, the number of added constraints is roughly 6 times smaller than in the flow-based MILP model. Despite the fact that for each instance our algorithm needs to call the MILP solver several times, calling it on much smaller MILP models offers a better overall running time.\\

We also generated instances with hyperedges sizes bounded by a (small) constant, in order to see how far we could increase the number of vertices for both algorithms. More precisely, we generated instances with hyperedges of size $7$, and density $d \in \{1, 3\}$ (for density $5$, the maximum number of vertices for which our algorithm was able to solve 100\% of the instances was only 300). 

\begin{table}
\centering
\begin{tabular}{ |C{0.8cm}|C{0.6cm}|C{2.7cm}|C{2.7cm}|C{2.3cm}|C{2.3cm}|}\hline
$n$ & $d$ & \ilp (sec) & \ilp (con) & \cga (sec) & \cga (con)\\ 
\hline
30 & 1 & 0.31 & 407.04 & 0.12 & 169.58\\ 
30 & 3 & 5.14 & 1259.36 & 2.03 & 512.14\\ 
\hline
50 & 1 & 0.47 & 699.32 & 0.25 & 286.58\\ 
50 & 3 & 12.85 & 2066.28 & 2.89 & 856.66\\ 
\hline
100 & 1 & 1.33 & 1396.66 & 0.62 & 572.82\\ 
100 & 3 & 39.35 & 4176.84 & 4.85 & 1752.02\\ 
\hline
200 & 1 & 5.38 & 2809.72 & 0.79 & 1132.54\\ 
200 & 3 & 106.51 [46] & 8269.80 & 4.76 & 3436.34\\ 
\hline
300 & 1 & 17.20 & 4209.62 & 1.40 & 1691.58\\ 
300 & 3 & 148.21 [33] & 12415.18 & 8.04 & 5117.94\\ 
\hline
400 & 1 & 41.62 & 5596.66 & 1.61 & 2239.86\\ 
400 & 3 & 220.36 [37] & 16584.57 & 23.73 & 7033.42\\ 
\hline
500 & 1 & 83.89 & 7002.62 & 2.29 & 2792.82\\ 
500 & 3 & 369.10 [46] & 20739.85 & 94.24 & 8969.68\\ 
\hline
750 & 1 & 296.43 & 10521.40 & 15.07 & 4265.36\\ 
750 & 3 & -1.00 [0] & 0.00 & 266.82 & 13454.44\\ 
\hline
1000 & 1 & 627.295 [4] & 14018.50 & 34.54 & 5645.48\\
1000 & 3 & -1.00 [0] & 0.00 & 666.70 [33] & 17785.06\\ 
\hline
\end{tabular}
\caption{Results for instances with hyperedges of size 7. Columns labeled with (sec) (resp. (con)) represent the average running time (resp. number of constraints).}
\label{tab:large}
\end{table}

The differences of running time is even more significant in this experiment. The algorithm of Dar \etal fails to solve 100\% of the instances within the time limit for 200 vertices already (density $3$). Moreover, for density $1$, there is a huge lack of efficiency between 750 and 1000 vertices for the flow-based MILP algorithm, going from 100\% of instances solved to 8\%.
Overall, we can observe that our approach allows to solve instances or a much larger size than the previous algorithm.

\section{Enumeration algorithm}\label{sec:enum}

In this section, we describe an approach to enumerate all optimal solutions of an instance of \mcishort.
When solving an optimization problem using an MILP formulation in which the solution is represented by 0-1 variables, a natural way to obtain an enumeration algorithm consists in adding new constraints in order to forbid previously found solutions. More formally, if the objective of the MILP is
$$Minimize \sum_{i=1}^n x_i$$
where each $x_i$ is a 0-1 variable, then one can forbid a given solution $S \subseteq \{1, \dots, n\}$ represented by the indices of all variables set to 1 by adding the following constraint:
$$\sum_{i \in S} x_i < |S|$$
Hence, forbidding a set of solutions $\mathcal{A}$ can be done by adding $|\mathcal{A}|$ new constraints to the model.
This idea was used by Agarwal \etal~\cite{AgArCaCaCo13} in order to obtain an algorithm enumerating all optimal solutions of an instance of \mcishort. This strategy, although being easy to implement, becomes much less efficient when the number of solutions of the instances increases, because the size of the MILP model becomes too large for the solver. We propose a new method for the enumeration of solutions, which, in a nutshell, consists in forbidding the solutions ``chunk by chunk''. To this end, we iteratively accumulate optimal solutions by exploring the neighborhood of a solution found (the way we explore this neighborhood will be explained later). Once this exploration is done, we forbid all optimal solutions found at the same time. A pseudo-code of this approach is presented in Algorithm~\ref{algo:enum}.

\begin{algorithm}[H]
\SetAlgoLined
\KwData{ a hypergraph $H=(V, \E)$}
\KwResult{$\mathcal{A}$: the set of all optimal solutions of $H$ } 
	$\mathcal{A} \leftarrow \emptyset$ \\
	$c^* \leftarrow$ cost of an optimal solution of $H$ \\
	\While{there exists a solution $S$ of cost $c^*$ which does not belong to $\mathcal{A}$\label{line:while}}{
		$\mathcal{N} \leftarrow$ neighborhood of $S$ \label{line:neighborhood}\\
		$\mathcal{A} \leftarrow$ $\mathcal{A} \cup \mathcal{N}$ \label{line:next}
	}
	\Return{$\mathcal{A}$}
 \caption{Enumeration algorithm for \mcishort} 
 \label{algo:enum}
\end{algorithm} 

Naturally, we use our constraint generation algorithm described previously in order to find new optimal solutions. Notice that once we have found one optimal solution of cost $c^*$, we shall add a new constraint to our ILP in order to find new solutions of size exactly $c^*$ in the next rounds, which usually speeds up the resolution.
We now describe the way we explore the neighborhood of a solution (which corresponds to Line~\ref{line:neighborhood} of Algorithm~\ref{algo:enum}). This step is done by forbidding an arbitrary edge $e$ of the previously found solution, by simply adding a new constraint to our ILP forcing the corresponding variable $x_e$ to $0$. We thus iteratively accumulate new optimal solutions until the solver returns that the obtained ILP does not admit a solution of the desired cost, which means that the exploration of the neighborhood is done. We then remove the newly added constraints used in this routine for the next loop in Line~\ref{line:while} of Algorithm~\ref{algo:enum}.

\begin{figure}[h]
\includegraphics[scale=0.70]{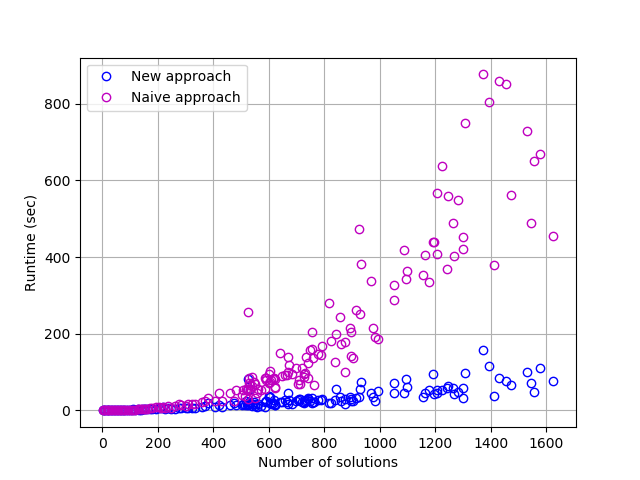}
\label{fig:enum}
\caption{Comparison of running times between the naive enumeration algorithm and our new approach, as a function of the number of solutions of the instances.}
\end{figure}

We evaluated the performance of our approach by comparing its running time to the natural approach of forbidding each new optimal found in the ILP described at the beginning of this section (still using our exact algorithm as a black box for finding new solutions). To this end, we generated a set of 1000 random instances of type 1 with a density of $\frac{m}{n}=2$, and $n=10$. These settings were chosen because they allow the random generation to produce instances of various different structures. In particular, we observed a quite fair distribution of the numbers of solutions, which seemed to be a meaningful parameter for the comparison of the two approaches.
Figure~\ref{fig:enum} presents the result of these experiments. As we can see, our new method offers a great improvement when the number of solutions is high, by reducing by more than 8 times the running time in our generated instances. 
These results suggest that our algorithm has a running time which is linear in the number of solutions in practice.

\section{Conclusion}\label{sec:conclusion}

In this paper we presented and evaluated an exact algorithm for the \mci problem, based on a constraint generation strategy in order to ensure connectivity. Our experiments, conducted on various randomly generated instances, demonstrated that our method outperforms the best previously known exact algorithm for this problem, relying on a flow-based MILP formulation. 
Since connectivity constraints appear very often in practical situations which are usually solved by the means of MILP, our results suggest that a constraint generation strategy can sometimes be much more efficient.
As a further research, it would be interesting to apply this technique to other optimization problems in which connectivity plays an important role.
It should be noted that during the empirical evaluation of the different sub-routines for our algorithm, we noticed high standard deviations in the running times. It would be thus interesting to understand which hypergraph parameters influence the complexity of our strategies. Apart from providing useful information about the problem and our method, this could be used in order to build a more structured benchmark of instances, which could be of great help for the evaluation of future exact algorithms.
Finally, our enumeration algorithms seems to be a promising method which should be tested for other similar problems.
\linebreak

\noindent\textbf{Acknowledgment.} We would like to thank Muhammad Abid Dar, Andreas Fischer, John Martinovic and Guntram Scheithauer for providing us the source code of their algorithm~\cite{DaFiMaSc18}.

%
%
%

%
%

%

\end{document}